\newlist{desiderata}{enumerate}{1}
\setlist[desiderata,1]{
  label={Desiderata~\arabic*},
  leftmargin=10\parindent,
  align=right,
  labelindent=4\parindent,
  labelsep=5mm
}
\newcommand{\pt}[2]{\frac{\partial{#1}}{\partial{#2}}}
\newenvironment{formulation2}{\fontsize{10}{11}\selectfont}{\par}
\newtheorem{conj}{Conjecture}[section]
\newtheorem{theorem}[conj]{Theorem}
\newtheorem{proposition}[conj]{Proposition}
\begin{document}


\title{A Closed-Form EVSI Expression for a Multinomial Data-Generating Process}

\author{%
Adam Fleischhacker\thanks{Department of Business Administration, University of Delaware, Newark, DE 19716, email: {\tt ajf@udel.edu}}, 
Pak-Wing Fok\thanks{Department of Mathematical Sciences, University of Delaware, Newark, DE 19716, email: {\tt pakwing@udel.edu}},
Mokshay Madiman\thanks{Department of Mathematical Sciences, University of Delaware, Newark, DE 19716, email: {\tt madiman@udel.edu}},
Nan Wu\thanks{Institute for Financial Services Analytics, University of Delaware, Newark, DE 19716, email: {\tt nanw@udel.edu}} 
} 

\maketitle

\begin{abstract}
This paper derives analytic expressions for the expected value of sample information (EVSI), the expected value of distribution information (EVDI), and the optimal sample size when data consists of independent draws from a bounded sequence of integers.  Due to challenges of creating tractable EVSI expressions, most existing work valuing data does so in one of three ways: 1) analytically through closed-form expressions on the \textit{upper bound} of the value of data, 2) calculating the expected value of data using numerical comparisons of decisions made using simulated data to optimal decisions where the underlying data distribution is known, or 3) using variance reduction as proxy for the uncertainty reduction that accompanies more data.   For the very flexible case of modelling integer-valued observations using a multinomial data-generating process with Dirichlet prior, this paper develops expressions that 1) generalize existing beta-Binomial computations,   2) do not require prior knowledge of some underlying ``true'' distribution, and 3) can be computed prior to the collection of any sample data.  
\end{abstract}



%


\section{Introduction}

The seminal work of \cite{raiffa1961} introduced preposterior analysis, a Bayesian recipe for estimating the value of information (VOI) prior to knowing the information's content.  The \emph{expected value of sample information} (EVSI), a particularly valuable VOI computation, values the information contained in sample observations prior to their collection.  \cite{raiffa1961} include many closed-form and oft-used expressions for calculating EVSI under the assumption of quadratic loss.  One such expression is for a Bernoulli data-generating process with beta prior distribution (a.k.a. a beta-Binomial model); each observation being either zero or one \cite[Table 6.2, p. 191]{raiffa1961}.  In this paper, we generalize the beta-binomial EVSI expression beyond binary-valued observations to the case where each data point is drawn from a bounded sequence of integers.  These results expand the availability of tractable VOI expressions to a useful scenario where previously value could only be approximated or bounded when a closed-form expression was needed.


Depending on a modeler's choices of actions, states of uncertainty, loss (or utility) functions, and probability models, tractable calculations of VOI may exist, but intractable formulations, especially for EVSI, are much more common.  In fact, reputed statistician Dennis Lindley has remarked that the question of sample size ``is embarrassingly difficult to answer'' due to difficulties calculating EVSI \cite{lindley1997}. More generally, \cite{hilton1981determinants} shows that simply characterizing the relationship between information and value is challenging; \cite{hilton1981determinants}'s work dispels the idea that information value will reliably exhibit monotonic relationships with information value determinants such as action flexibility, risk aversion, or a decision maker's wealth.

While for some EVSI and VOI problems, closed-form solutions are attainable \cite{raiffa1961, bickel2008relationship, bhattacharjya2013value}, value of information solutions are often difficult to formulate.  Hence, many papers are known for their ability to characterize aspects of VOI expressions such as the distributional properties of the expected value of perfect information (EVPI) \cite{mehrez1982some}, the impact of an exogenous variable on EVPI \cite{keisler2004comparative}, and the additivity of information value when multiple sources of uncertainty exist \cite{keisler2005additivity}. EVSI calculations, in particular, often result in intractable expressions of multiple integrals where only numerical methods can yield results \cite{lin1974}.  Even then, many numerical methods still require further simplifying assumptions (see, e.g., \cite{strong2015}).   While it is possible to approximate VOI computations via normal approximations (see, e.g., \cite{morita2008determining,jalal2018gaussian})  or using a computationally intense simulation-based methodology (see, e.g., \cite{fleischhacker2015,wang2021blockchain}), closed-form expressions yield instantaneous and accurate value computations with more interpretable insights regarding the effects of prior beliefs and sample sizes.  

In this paper, we provide a new EVSI calculation for a flexible (i.e. multinomial) data-generating process that adheres to three desiderata outlined in \cite[p.44]{raiffa1961}:
\begin{desiderata}[]
\item [\textbf{Tractable}] EVSI is easily calculated using a closed-form expression.
\item [\textbf{Rich}] A decision maker's prior beliefs and information are readily incorporated as part of the calculation.
\item [\textbf{Interpretable}] The expression for EVSI provides insight as to the effects of prior beliefs and sample size choices on the expected value of a sample.
\end{desiderata}

\begin{table}[h]
\centering
\renewcommand{\arraystretch}{0.8}
\begin{tabular}{ccc}
\toprule
{\scriptsize \textbf{Generating Process}} & {\scriptsize \textbf{Conjugate Prior}}  & {\scriptsize \textbf{Source}}\\
\midrule
\multirow{2}{*}{Bernoulli$(\theta)$}  & \multirow{2}{*}{$(\theta) \thicksim $Beta}     & {\cite{raiffa1961}} \\
& & \cite{phamgia1992}\\
\midrule
{Poisson$(\lambda)$}    & {$\lambda \thicksim$gamma}  & {\cite{raiffa1961}}\\
\midrule
\multirow{3}{*}{Normal$(\mu,\sigma)$}     & $\mu \thicksim$ Normal, $\sigma$ known       & {\cite{raiffa1961}} \\ \cline{2-3}
                            & $\mu$ known, ${\sigma^2} \thicksim$ inv. Gamma                & {\cite{raiffa1961}} \\\cline{2-3}
                            & ${\sigma^2} \thicksim$ inv. Gamma, $\mu | \sigma^2 \thicksim$ Normal  & {\cite{raiffa1961}}  \\
\midrule
{Multinomial($t$)\footnotemark}  & {$t \thicksim$ Dirichlet}     &  {This Paper} \\
\bottomrule
\end{tabular}
\caption{Position of this paper in comparison to other tractable EVSI calculations.}
\label{tab:EVSIcomparison}
\end{table}

Shown in Table 1, our point of departure is generalizing the EVSI calculation for a Bernoulli data-generating process with beta prior (a.k.a. a beta-binomial model) to the case of a multinomial data generating process with Dirichlet prior.  Rich treatment and illustrative examples surrounding EVSI calculations for the beta-binomial conjugacy can be found in \cite{howard1970decision}.  Additionally, \cite{phamgia1992} provide explicit closed-form value of information computations for the beta-binomial case and is very close in spirit to this work, but does not investigate the Dirichlet-multinomial setting. In relation to the multinomial sampling process we explore in this paper, existing work has focused on non-utility based approaches where data is valued based on its ability to bound a parameter of interest within a certain level of precision \cite{adcock1993,cao2009comparison}.  Our approach, in contrast, extends the utility-based valuation of sampling to a multinomial sampling environment to yield closed-form expressions for both EVSI and the expected value of distribution information (EVDI).  Publication of analytically tractable expressions will be able to supplant the still-present usage of Monte Carlo simulation in multinomial settings (see, e.g., \cite{xiang2019optimal}). 

\footnotetext{With support interpreted as a sequence of integer values.}

When closed-form EVSI expressions are unavailable, quantification of value created through uncertainty reduction typically relies on one of three techniques: 1) closed-form expressions on the \textit{upper bound} of the value of data, 2) simulated comparisons between valuing decisions made by an oracle who knows the underlying data distribution to decisions made by a less-informed decision maker, or 3) using variance-reduction as a proxy for how data reduces underlying uncertainty in the data-generating process.  For examples of the first type, \cite{mehrez1985effect} bound EVPI for a risk-averse decision maker and \cite{Yuechen2006} place an upper bound on the value of knowing the true distribution when one already knows the mean and variance of that distribution.  Examples of the second type often compare a Bayesian updating procedure to a known optimal solution \cite{eppen97, milner2005, chen2008dynamic, saghafian2016newsvendor}.  Lastly, computing the value of variance reduction independent of the specific quantity of data is also seen within the literature \cite{Gerchak1992, Kwak2011271}.

\section{Problem Setup}
\label{sec:PS}
Despite substantial efforts, notation for preposterior analysis has not been standardized and is often a matter of personal taste \cite{raiffa2008early}.  To aid the reader with this paper's notation surrounding its random variables and their realizations,  we present the following summary breaking the notation into three levels of analysis:

\bigskip

\begin{formulation2}
{\noindent{\bf 1. Data/Sample}. Data is an integer-valued random variable with support $\{0,1,\ldots,M\}$.  Sample is a random vector referring to either a sequence of $n$ data observations or a vector of counts representing the number of occurrences of each potential data value recorded in $n$ observations.}\\
\begin{tabular}{rl}
&\\
$D$:         &  A random variable representing a single data observation.\\
$d$:       &  A single realization of $D$ with integer valued support: $d \in \{0,1,\ldots,M\}$.\\
$X \equiv (X_1,\ldots,X_n)$:       &  A random \emph{vector} of $n$ observations of $D$.\\
$x \equiv (x_1,\ldots,x_n)$:       &  A realization of data vector $X$.\\
${\cal D}^n$:    &  The support of $X$ when $n$ realizations are observed.\\
$n_k$:       & the number of times that $k \in \{0,1,\ldots,M\}$ appears in $x$.\\
$(n_0,n_1,\ldots,n_M)$: &  A vector of counts of occurrences for each potential data value.
\end{tabular}
\end{formulation2}
\bigskip
\begin{formulation2}
{\noindent{\bf 2. Data/Sampling Distributions}. Data and sampling distributions are identical terms referring to the probability distribution governing the data-generating process.  Data distribution refers to generating individual data points and sampling distribution preferred when talking about a sequence of observations.} \\
\begin{tabular}{rl}
&\\
$T \equiv (T_0,T_1,\ldots,T_M)$:       &  A random vector representing a data distribution. \\
& Random elements $T_k$ are data distribution parameters \\
& representing the probability of data realization being $k$.\\
$t \equiv (t_0,t_1,\ldots,t_M)$:       &  A realization of random vector $T$ such that \\
& $t_k = p(D=k)$ for $k \in \{0,1,\ldots,M\}$.\\
$t^*$:       & The ``true'' data distribution or sampling distribution; \\
&only knowable by an oracle. \\
${\cal T}$:         &  The space or set of all possible data distributions. $T, t, t^* \in {\cal T}$. \\
\end{tabular}
\end{formulation2}
\bigskip
\begin{formulation2}
{\noindent{\bf 3. Prior/Posterior Distributions}. Continuous multivariate probability distributions with domain of all possible data distributions.} \\
\begin{tabular}{rl}
&\\
$\pi$:       &  A prior from which data distributions are generated.\\
$\pi_X$:       &  A posterior that updates $\pi$ in light of data $X$.\\
\end{tabular}
\end{formulation2}
\bigskip

\subsection{Modelling Data and Loss}
\label{sec:dataLoss}
Consider a \emph{data-generating process} that generates independent and identically distributed samples from a bounded sequence of $M+1$ integers.  For notational simplicity,
we rescale the sequence to be $[M] \equiv \{0,1,\ldots,M\}$.  For practical motivation, the data could represent product demand and the goal is to make accurate predictions for inventory control \cite{yelland2009bayesian}. For the specific case of demand uncertainty, we note that there are asymmetric and other loss functions that would be preferred to the quadratic loss function used here, but closed-form expressions are not forthcoming for those cases.

The data-generating process is governed by an unknown \emph{data distribution}, $t$, with discrete-finite support 
$[M]$.
Thus the statistical model for the data-generating process is parameterized by the standard $M$-dimensional simplex
of probabilities
$$
{\cal T} =\{t=(t_0,\ldots, t_M)\in\mathbb{R}_{+}^{M+1}: t_0+\ldots+t_M=1\};
$$
this infinite (but finite-dimensional) parameter space describes how we are labeling the potential data distributions.  If the sample size of the data is $n$, we have $n$ values $x_1, \ldots, x_n \in [M]$ being generated by the
data-generating process. For a given $t \in {\cal T}$, the associated data-generating process $p_t^{(n)}$ assigns probability
\begin{equation}
\label{eq:dataProc}
p_t^{(n)}(x_1,\ldots, x_n)=\prod_{i=1}^n t_{x_i}
\end{equation}
to this particular sequence of data values. In particular, if the sample size is 1, the data-generating process is simply given by
$$
p_t(d) \equiv p_t^{(1)}(d)=t_{d}, \qquad d \in [M].
$$

It is clear that the number of occurrences of particular data values in the sample is a sufficient statistic for the
model described, and that the sampling distribution for this sufficient statistic is just the multinomial model.
Specifically, if $n_d=|\{1\leq i\leq n: x_i=d\}|$, then $(n_0, \ldots, n_M)$ is a sufficient statistic, and we have, with
obvious abuse of notation,
\begin{equation}
\label{eq:sampProc}
p_t (n_0, \ldots, n_M)= \binom{n}{n_0 \cdots n_M} \prod_{d=0}^M t_d^{n_d}.
\end{equation}
Note that  $n_0+\ldots+n_M=n$ by definition; so we do not write the superscript $(n)$ when using the sufficient statistic to represent the data.


When making predictions for future data, ideally the action (or prediction) is close to the actual data realization.  For tractability, we consider a quadratic terminal opportunity \emph{loss function} for a single prediction to be of the following form:
\begin{equation}
\label{eq:quadLoss}
\ell(d, a) = k (d - a)^2
\end{equation}
where $k>0$ is a known constant, $a$ is the action/prediction, and $d \in [M]$ is the actual data realization.

To briefly make the above notation more concrete, let's imagine forecasting product demand for a product that will sell between 0 and 5 units ($M=5$).  Each period's i.i.d demand, $d \in \{0, 1, \ldots, 5\}$, has an associated probability of occurrence, $p_t(0), p_t(1), \ldots, p_t(5)$, which is represented more compactly as $t_0, t_1, \ldots, t_5$.  The effectiveness of any action will be measured using quadratic loss scaled by a factor $k$ such that if $k=5$, $d=4$, and $a=1$, then $\ell(4, 1) = 45$.  The decision maker is contemplating the value of $n=3$ observations where generated data, $(x_1,x_2,x_3)$, might be something like $(0, 5, 0)$ and the associated sufficient statistic of counts, $(n_0, \ldots, n_5)$, would be $(2, 0, 0, 0, 0, 1)$.  Note that $t \equiv t_0, t_1, \ldots, t_5$ parameterizes both the data-generating process of eq. (\ref{eq:dataProc}) yielding $(x_1,x_2,x_3)$ and the equivalent sampling process of eq. (\ref{eq:sampProc}) yielding $(n_0, \ldots, n_5)$.  As a result, we refer to $t$ as both data distribution and sampling distribution depending on context.

\subsection{Preposterior Analysis}
\label{subsec:oracle}
For any data distribution $t$, define the expectation of loss as:
\begin{equation}
R(t,a) = \mathbb{E}_{D|T=t}\left[ \ell(D,a)\right] = \sum_{d=0}^{M} p_{t}(d) \ell(d,a).
\end{equation}
where $R(t,a)$ is known as the Bayes risk.  Since a decision maker (DM) does not know the underlying ``true'' $t^*\in{\cal T}$ data distribution, the minimum Bayes risk, $\min_a R(t^*, a)$, is likely unachievable.

For a DM, risk is evaluated on an average basis based on the probability distribution the DM places over the simplex ${\cal T}$.  Without any sample observations, this distribution is the prior $\pi$ over all possible data distributions in ${\cal T}$.  The average risk of taking action $a$ using prior $\pi$ is
\begin{equation}
\bar{R}(\pi,a) = \mathbb{E}_{T} \left[ R(T,a) \right],
\label{eqn:evsiA}
\end{equation}
with $T \sim \pi$.
The Bayes action for $\pi$ is
\begin{equation}
a^*(\pi)=\underset{a \in {\cal A}}{\rm arg~min} ~\bar{R}(\pi,a).
\end{equation}
The Bayes risk for $\pi$ is
\begin{equation}
\bar{R}(\pi, a^*(\pi)) = \underset{a \in {\cal A}}{\rm min} ~\bar{R}(\pi,a).
\end{equation}

Access to a sample $X \equiv (X_1,\ldots,X_n)$ results in a different decision with different risk.  With sample observations, the DM applies Bayes' rule to update $\pi$ to $\pi_X$ (the posterior) and calculates the associated optimal Bayes action $a^*(\pi_X)$.  Since $X$ is unknown prior to actually collecting the sample, the Bayes risk for $\pi_X$ is itself a random variable.  Hence, we evaluate the DM's prior expectation of loss with sample information over all possible samples $X$,
\begin{eqnarray}
\mathbb{E}_X\left[\bar{R}(\pi_X,a^*(\pi_X))\right] = \mathbb{E}_T \mathbb{E}_{X|T} \left[R(T,a^*(\pi_X))\right],
\label{eqn:evsiB}
\end{eqnarray}
with $T \sim \pi$ and the right-hand side expression derived by substituting $\pi_X$ for $\pi$ in eq. (\ref{eqn:evsiA}) and applying the law of total expectation.

Thus, the \emph{expected value of a sample of information} (EVSI), $V_n(\pi)$, is the difference between the prior expectations of loss with and without sample $X$ under prior $\pi$:
\begin{eqnarray}
V_n(\pi)&=& \bar{R}(\pi,a^*(\pi)) - \mathbb{E}_X\left[\bar{R}(\pi_X,a^*(\pi_X))\right] \\
&=& \mathbb{E}_{T} \left[R(T,a^*(\pi))\right] - \mathbb{E}_T \mathbb{E}_{X|T} \left[R(T,a^*(\pi_X)) \right]
\label{eqn:evsi}
\end{eqnarray}
where $T \sim \pi$ and eq. (\ref{eqn:evsi}) follows from eqs. (\ref{eqn:evsiA}) and (\ref{eqn:evsiB}).  Proposition~\ref{th:nonnegativity} formalizes our intuition that this expected value of sample information should be non-negative.

\begin{proposition}
\label{th:nonnegativity}
Suppose data distribution $T \equiv (T_0,\ldots,T_M)$ is
drawn from a given prior $\pi$. Assume further that a DM
is given $n$ samples $X \equiv (X_1,\ldots,X_n)$ and updates
his/her prior to the posterior $\pi_X$. Then, under quadratic loss, the 
expected value of these $n$ samples is non-negative, i.e.
\begin{equation}
V_n(\pi) = \mathbb{E}_{T} \left[R(T,a^*(\pi))\right] - \mathbb{E}_T \mathbb{E}_{X|T} \left[R(T,a^*(\pi_X)) \right] \geq 0. \label{eqn:barV}
\end{equation}
\end{proposition}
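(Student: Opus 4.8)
The plan is to prove this \emph{without} ever invoking the quadratic form of $\ell$: non-negativity of EVSI is a structural consequence of the fact that a decision made \emph{after} seeing the data cannot be worse, on average, than one fixed in advance. The engine of the argument is an exchange in the order of the two expectations appearing in the second term, followed by the defining optimality of the posterior Bayes action.

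First I would rewrite the second term. Since $T\sim\pi$ and $X\mid T$ follows the sampling distribution of eq.~(\ref{eq:sampProc}), the pair $(T,X)$ has a well-defined joint law, and the law of total expectation lets me swap the conditioning,
\begin{equation}
\mathbb{E}_T \mathbb{E}_{X|T}\left[R(T,a^*(\pi_X))\right] = \mathbb{E}_X \mathbb{E}_{T|X}\left[R(T,a^*(\pi_X))\right].
\end{equation}
The purpose of the swap is that, by Bayes' rule, the conditional law of $T$ given $X$ is \emph{exactly} the posterior $\pi_X$. Hence for any action measurable with respect to $X$ (in particular $a^*(\pi_X)$, which is a deterministic function of $X$), the inner expectation is the posterior average risk, $\mathbb{E}_{T|X}[R(T,a)] = \bar{R}(\pi_X,a)$; taking $a=a^*(\pi_X)$ gives $\mathbb{E}_{T|X}[R(T,a^*(\pi_X))] = \bar{R}(\pi_X,a^*(\pi_X)) = \min_a \bar{R}(\pi_X,a)$.

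The key inequality is then pointwise in the realized sample. Because $a^*(\pi_X)$ minimizes $\bar{R}(\pi_X,\cdot)$ by definition, replacing it with the prior-optimal action $a^*(\pi)$ — which carries no dependence on $X$ — can only raise the posterior average risk,
\begin{equation}
\bar{R}(\pi_X,a^*(\pi_X)) \;\le\; \bar{R}(\pi_X,a^*(\pi)) = \mathbb{E}_{T|X}\left[R(T,a^*(\pi))\right].
\end{equation}
Applying $\mathbb{E}_X$ to both sides and undoing the conditioning on the right (since $R(T,a^*(\pi))$ does not depend on $X$, its iterated expectation collapses back to $\mathbb{E}_T[R(T,a^*(\pi))]$) yields
\begin{equation}
\mathbb{E}_T \mathbb{E}_{X|T}\left[R(T,a^*(\pi_X))\right] \;\le\; \mathbb{E}_T\left[R(T,a^*(\pi))\right],
\end{equation}
which is precisely the claim $V_n(\pi)\ge 0$.

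I expect the only genuine care to be needed in justifying the exchange of expectations together with the measurability and integrability of the integrands, but these are benign here: $D$ takes values in the finite set $[M]$, the Bayes action is confined to $[0,M]$ so that $R(T,a^*(\cdot))$ is bounded, and Fubini applies over the compact simplex ${\cal T}$. The substantive step — and the one worth stating carefully — is the optimality inequality above, since it is the single place where the value of \emph{adapting} the action to the data enters; everything else is bookkeeping with the law of total expectation.
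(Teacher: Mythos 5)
Your proof is correct, but it takes a genuinely different route from the paper's. You argue structurally: swap $\mathbb{E}_T\mathbb{E}_{X|T}$ for $\mathbb{E}_X\mathbb{E}_{T|X}$, recognize that the conditional law of $T$ given $X$ is exactly the posterior $\pi_X$, so the inner expectation becomes $\bar{R}(\pi_X,a^*(\pi_X))=\min_a \bar{R}(\pi_X,a)$, and then invoke the defining optimality of the posterior Bayes action against the fixed comparator $a^*(\pi)$. This ``adapting to the data cannot hurt'' argument never uses the form of $\ell$, so it establishes $V_n(\pi)\geq 0$ for essentially any loss for which Bayes actions exist --- a more general and more elementary argument. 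The paper instead exploits quadratic loss throughout: it identifies $a^*(\pi)=\mathbb{E}[D]$ and $a^*(\pi_X)=\mathbb{E}_{D|X}[D]$, computes the first term of $V_n(\pi)$ as $k\,\textrm{Var}[D]$ and the second as $k\,\mathbb{E}_X\left[\textrm{Var}_{D|X}[D]\right]$, and concludes via the \emph{law of total variance} that $V_n(\pi)=k\,\textrm{Var}_X\left[\mathbb{E}_{D|X}[D]\right]\geq 0$. What the paper's computation buys is an exact identity rather than merely a sign: EVSI equals $k$ times the variance of the posterior mean, and this is precisely the machinery (Bayes action as marginal-likelihood mean, Bayes risk as variance) that the proof of Theorem~\ref{th:evsi} then evaluates in closed form under the Dirichlet prior. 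One small irony worth noting: your integrability bookkeeping (``the Bayes action is confined to $[0,M]$'') itself quietly uses the quadratic-loss fact that the Bayes action is a posterior mean; in this setting the point is moot anyway, since $X$ ranges over the finite set ${\cal D}^n$, so the outer expectation is a finite sum and no Fubini-type care is needed.
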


\begin{proof}   
See Appendix.  \hfill$\Box$
\end{proof}   
\noindent Because the ordering within the sample $X$ does not matter, the inner expectation in (\ref{eqn:barV}) is performed over
$(n_0,n_1,\ldots,n_M) \sim {\rm Multinomial}(t)$ conditioned on $T=t$ where $n_j$ is the number of times
that $j \in [M]$ appears in the sample, and the outer expectation is performed
over $T \sim \pi$.


\section{Tractable Valuation of Sample Information}
To arrive at a tractable valuation for (\ref{eqn:evsi}), we leverage the Dirichlet distribution as a prior for three reasons: 1) it is a conjugate prior to categorical/multinomial outcomes, 2) its support is the $M$-dimensional simplex ${\cal T}$, and 3) it has flexibility to model many types of 
prior information for the decision maker.  With the Dirichlet assumption, the main result of this paper, Theorem~\ref{th:evsi}, can be presented:
%
%
\begin{theorem}
For data distribution $T$ with support $[M]$ and prior $\pi = \textrm{Dirichlet}(\alpha_0,\alpha_1,\ldots,\alpha_M)$, the expected reduction in quadratic loss after observing $n$ data samples, also called the expected value of sample information (\textit{EVSI}), is given by:
\begin{equation}
V_n(\pi) = \frac{k n (c_2-c_1^2) }{(n + \alpha)(1+\alpha)}. \label{eqn:EVSI}
\end{equation}
where $\alpha = \sum_{d=0}^M \alpha_d$ is the precision/concentration parameter of the Dirichlet distribution (see \cite{huang2005maximum}) and $c_{1} = \frac{1}{\alpha}\sum_{d=0}^M d \alpha_d$ and $c_{2} = \frac{1}{\alpha}\sum_{d=0}^M d^2 \alpha_d$ are the first and second moments of
the data under the marginal likelihood $(\alpha_1,\alpha_2,\ldots,\alpha_M)/\alpha$.
\label{th:evsi}
\end{theorem}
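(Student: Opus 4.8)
The plan is to reduce the entire computation to the behaviour of the single scalar $Y := \mu(T) = \sum_{d=0}^M d\,T_d$, the conditional data mean, since under quadratic loss the risk depends on the data distribution only through its first two moments. First I would expand the Bayes risk as $R(t,a) = k\,\mathbb{E}_{D|T=t}[(D-a)^2] = k\big[\mathrm{Var}(D\mid T=t) + (\mu(t)-a)^2\big]$, which cleanly isolates an action-independent variance term from a squared-error term. Averaging this over the prior (or posterior) and minimizing over $a$ (taking the action space to contain the relevant means, e.g.\ $\mathcal{A}\supseteq[0,M]$) shows that the Bayes action is exactly the posterior mean of $Y$: namely $a^*(\pi) = \mathbb{E}_T[Y] = c_1$ and $a^*(\pi_X) = \mathbb{E}[Y\mid X]$, the latter obtained from Dirichlet--multinomial conjugacy.

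The key step, which I expect to carry the whole argument, is to observe that when both terms of $V_n(\pi)$ are expanded using this decomposition, the action-independent piece $k\,\mathbb{E}_T[\mathrm{Var}(D\mid T)]$ appears identically in each and cancels. What survives is
\[
V_n(\pi) = k\,\mathrm{Var}_T\!\big(\mu(T)\big) - k\,\mathbb{E}_{T,X}\big[(\mu(T) - \mathbb{E}[Y\mid X])^2\big] = k\,\mathrm{Var}_T(Y) - k\,\mathbb{E}_X\big[\mathrm{Var}(Y\mid X)\big],
\]
the last equality because $\mu(T)=Y$ and the inner expectation is a conditional variance. The law of total variance applied to $Y$ given $X$ then collapses this to
\[
V_n(\pi) = k\,\mathrm{Var}_X\!\big(\mathbb{E}[Y\mid X]\big) = k\,\mathrm{Var}_X\!\big(a^*(\pi_X)\big),
\]
so EVSI is simply $k$ times the variance of the posterior Bayes action as the unobserved sample varies under its marginal (prior-predictive) law. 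This reformulation is the conceptual heart of the result and, since $k>0$, also immediately reproves non-negativity (Proposition~\ref{th:nonnegativity}).

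It then remains to evaluate this variance in closed form. Using conjugacy, the posterior mean is the linear statistic $a^*(\pi_X) = (\alpha c_1 + S)/(\alpha+n)$ with $S = \sum_{i=1}^n X_i = \sum_{d} d\,n_d$, so $\mathrm{Var}_X(a^*(\pi_X)) = \mathrm{Var}(S)/(\alpha+n)^2$. I would compute $\mathrm{Var}(S)$ by a second application of the law of total variance, now conditioning on $T$: given $T$ the $X_i$ are i.i.d., contributing $n\,\mathbb{E}_T[\mathrm{Var}(D\mid T)] = n\big(c_2 - \mathbb{E}_T[\mu(T)^2]\big)$, while the conditional means contribute $n^2\,\mathrm{Var}_T(\mu(T))$. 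The single genuinely model-specific calculation is the Dirichlet moment $\mathrm{Var}_T(\mu(T)) = (c_2 - c_1^2)/(\alpha+1)$, which follows from the standard formulas $\mathrm{Var}(T_d) = \tilde\alpha_d(1-\tilde\alpha_d)/(\alpha+1)$ and $\mathrm{Cov}(T_d,T_e) = -\tilde\alpha_d\tilde\alpha_e/(\alpha+1)$, with $\tilde\alpha_d = \alpha_d/\alpha$, once the off-diagonal cross terms are recognized to combine into $c_1^2$.

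Assembling these pieces gives $\mathrm{Var}(S) = n(\alpha+n)(c_2-c_1^2)/(\alpha+1)$; dividing by $(\alpha+n)^2$ and multiplying by $k$ telescopes to $V_n(\pi) = k n (c_2-c_1^2)/[(n+\alpha)(1+\alpha)]$, exactly as claimed. The only real obstacle is the variance-of-posterior-mean reduction of the second paragraph; once that identity is established, the remainder is bookkeeping with Dirichlet moments and one further use of the law of total variance.
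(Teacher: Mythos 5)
Your proof is correct, and it takes a genuinely different route from the paper's. The paper proves Theorem~\ref{th:evsi} by direct expansion: it writes $a^*(\pi_X)=(\alpha c_1+nZ)/(\alpha+n)$, expands $R(t,a^*(\pi_X))$ as a quadratic in $Z$, computes the multinomial moments $\mathbb{E}_{X|T=t}[Z]$ and $\mathbb{E}_{X|T=t}[Z^2]$, then integrates against the Dirichlet prior using the moments $\mathbb{E}[p_T(i)^2]$ and $\mathbb{E}[p_T(i)p_T(j)]$, and finally simplifies a sizable algebraic expression to the intermediate result $\mathbb{E}_T\mathbb{E}_{X|T}[R(T,a^*(\pi_X))]=k(c_2-c_1^2)\,\alpha(1+\alpha+n)/[(1+\alpha)(n+\alpha)]$ before subtracting from $k(c_2-c_1^2)$. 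You instead front-load the identity $V_n(\pi)=k\,\mathrm{Var}_X\bigl(a^*(\pi_X)\bigr)$, which collapses the whole problem to the prior-predictive variance of the posterior mean; notably, this identity is essentially what the paper establishes in its proof of Proposition~\ref{th:nonnegativity} (there written $V_n=k\,\mathrm{Var}_X[\mathbb{E}_{D|X}[D]]$, the same quantity since $\mathbb{E}_{D|X}[D]=\mathbb{E}[\mu(T)\mid X]=a^*(\pi_X)$), yet the paper never exploits it in the theorem's proof. From there your bookkeeping checks out: $\mathrm{Var}_T(\mu(T))=(c_2-c_1^2)/(\alpha+1)$, hence $\mathbb{E}_T[\mathrm{Var}(D\mid T)]=\alpha(c_2-c_1^2)/(\alpha+1)$, giving $\mathrm{Var}(S)=n(\alpha+n)(c_2-c_1^2)/(\alpha+1)$ and the stated formula after dividing by $(\alpha+n)^2$ and scaling by $k$. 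What your approach buys is economy and structure: one conditional-variance decomposition simultaneously yields non-negativity and the closed form, the only model-specific input is a single Dirichlet moment, and the answer is exhibited transparently as ``$k$ times the variance of the posterior Bayes action.'' What the paper's approach buys is that it is entirely mechanical---no need to recognize the conditional-variance structure---and it produces the posterior expected risk explicitly as an intermediate quantity. One point worth making explicit in your write-up: the cancellation of $k\,\mathbb{E}_T[\mathrm{Var}(D\mid T)]$ between the two terms of $V_n$ requires that both Bayes actions equal the respective posterior means, which needs the action space to contain the interval $[0,M]$; you flagged this assumption, and the paper makes the same one implicitly when it finds $a^*$ by setting a derivative to zero.
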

\begin{proof}    See Appendix. \hfill$\Box$
\end{proof}   

Theorem~\ref{th:evsi} gives the expected value of observing an $n$-trial multinomial sample with Dirichlet prior where support of the underlying data-generating process is the bounded sequence of integers $[M] = \{0,1,\ldots,M\}$.  This is a natural generalization of valuing an $n$-trial binomial sample with beta prior where support of the underlying data-generating process is restricted such that $[M] = \{0,1\}$.  With just a slight change of notation, we know from \cite{phamgia1992} that EVSI for the beta-binomial case in closed-form is:
\begin{equation}
\frac{k n}{n + \alpha_0 + \alpha_1} \cdot \frac{\alpha_0 \alpha_1}{(\alpha_0 + \alpha_1)^2(\alpha_0 + \alpha_1 + 1)} \label{eqn:EVSIphamgia}
\end{equation}
where $\pi \sim Beta(\alpha_0,\alpha_1)$.  Replacing this prior with the equivalent Dirichlet parameterization of $\pi \sim Dirichlet(\alpha_0,\alpha_1)$ and using Theorem~\ref{th:evsi} yields an identical result:
\begin{equation}
\label{eqn:evsiComparison}
  \begin{split}
  V_n(\pi) &= \frac{k n (c_2-c_1^2) }{(n + \alpha)(1+\alpha)} \\
  &= \frac{k n}{(n + \alpha_0 + \alpha_1)}\cdot\frac{\frac{\alpha_1}{\alpha_0 + \alpha_1}-\frac{\alpha_1^2}{(\alpha_0 + \alpha_1)^2}}{(\alpha_0 + \alpha_1 + 1)} \\
  &= \frac{k n}{(n + \alpha_0 + \alpha_1)}\cdot\frac{\alpha_0 \alpha_1}{(\alpha_0 + \alpha_1)^2(\alpha_0 + \alpha_1 + 1)} 
  \end{split}
\end{equation}

As a direct consequence of Theorem \ref{th:evsi}, when $n \to \infty$, we have an expression for the \textit{expected value of distribution information} (EVDI), as an infinite sample gives the data distribution exactly:
\begin{equation}
V_{\infty}(\pi) = \lim_{n \to \infty} V_n(\pi) = \frac{k(c_2 - c_1^2)}{1+\alpha}. \label{eqn:EVDI}
\end{equation}
Lastly, we can express the efficiency $\eta$ of the sample information as a function of the number of sample points using the ratio of (\ref{eqn:EVSI}) to (\ref{eqn:EVDI}) as:
\begin{equation}
\eta = \frac{n}{n + \alpha}. \label{eqn:eta}
\end{equation}
Hence, the percentage of value obtained through sampling is given by the ratio of the number of data points $n$ to the sum of the $n$ data points and the concentration parameter $\alpha$ of a Dirichlet distribution.  This sampling efficiency calculation directly simplifies to the known formula of the beta-binomial case from \cite{raiffa1961}(in our notation): $\eta = n / (\alpha_0 + \alpha_1)$ where $\pi \sim Beta(\alpha_0,\alpha_1)$.


Again, we make the notation more concrete, by revisiting our forecasting product demand example from the end of \S\ref{sec:dataLoss}.  Recall, we have a product that will sell between 0 and 5 units ($M=5$) and loss is scaled by $k=5$.  The decision maker is contemplating the value of $n=3$ observations.  Introducing a zero-inflated prior $\pi \sim Dirichlet(\tfrac{10}{6},\tfrac{1}{6},\tfrac{1}{6},\tfrac{1}{6},\tfrac{1}{6},\tfrac{1}{6})$ means $\alpha=\tfrac{15}{6}$, $c_1 = \tfrac{6}{15} \cdot (0 \cdot \tfrac{10}{6} + 1 \cdot \tfrac{1}{6} + 2 \cdot \tfrac{1}{6} + 3 \cdot \tfrac{1}{6} + 4 \cdot \tfrac{1}{6} + 5 \cdot \tfrac{1}{6}) = 1$, $c_2 = \tfrac{6}{15} \cdot (0 \cdot \tfrac{10}{6} + 1 \cdot \tfrac{1}{6} + 4 \cdot \tfrac{1}{6} + 9 \cdot \tfrac{1}{6} + 16 \cdot \tfrac{1}{6} + 25 \cdot \tfrac{1}{6}) = \tfrac{11}{3}$.  Plugging into eq. (\ref{eqn:EVSI}) yields EVSI $V_3(\pi)=\tfrac{160}{77} \approx 2.08$. and EVDI $V_{\infty}(\pi)=\tfrac{80}{21} \approx 3.81$.  From eq. (\ref{eqn:eta}) we get $\eta = \tfrac{6}{11} \approx 54.5\%$, so the learning from $n=3$ samples is expected to provide more than half of the maximum possible reduction in loss. Following from eqs. (\ref{eqn:beg}) - (\ref{eqn:end}), $a^*(\pi)=1$ and prior expected loss $\bar{R}(\pi,a^*(\pi)) = 5 \cdot (-1^2 \cdot \tfrac{10}{15} + 0^2 \cdot \tfrac{1}{15} + 1^2 \cdot \tfrac{1}{15} + 2^2 \cdot \tfrac{1}{15} + 3^2 \cdot \tfrac{1}{15} + 4^2 \cdot \tfrac{1}{15}) = \tfrac{40}{3} \approx 13.33.$  And thus, we can also get the prior expectation of posterior loss $\mathbb{E}_X\left[\bar{R}(\pi_X,a^*(\pi_X))\right] = \bar{R}(\pi,a^*(\pi)) - V_3(\pi) = \tfrac{40}{3} - \tfrac{160}{77} \approx 11.26.$

\section{Notes on Richness and Interpretability of Modelings Assumptions}
In the previous section, we showed one of the three EVSI desiderata, tractability, can be achieved for a multinomial data-generating process with Dirichlet prior.  The multinomial distribution is flexible enough to model any discrete (finite) data distribution.  Its prior, the Dirichlet distribution, is also flexible in its ability to model a wide range of distributions over a simplex.  Yet, some sacrifice of richness in modeling prior beliefs is made in the name of tractability.  Most notably, a more rich/flexible alternative prior over a simplex is the logistic-normal distribution \cite[see discussion in]{Aitchison1980}.  The  most glaring weakness of the Dirichlet distribution is in modeling prior beliefs where there is some type of correlation structure between data observations.  For example, observing a high data value, say 100,  would make one think values of 101 and 99 are also more likely to occur than data values further away.  However, the Dirichlet distribution, as a prior distribution to multinomial data, is unable to capture this structure.   Notably, the distribution-free underpinnings of the Kaplan-Meier estimator also ignore this potential correlation among data observations, yet shows favorable results in a similar repeated newsvendor setting \cite{huh2011} .

\begin{figure}
\centering
\includegraphics[width=0.95\textwidth]{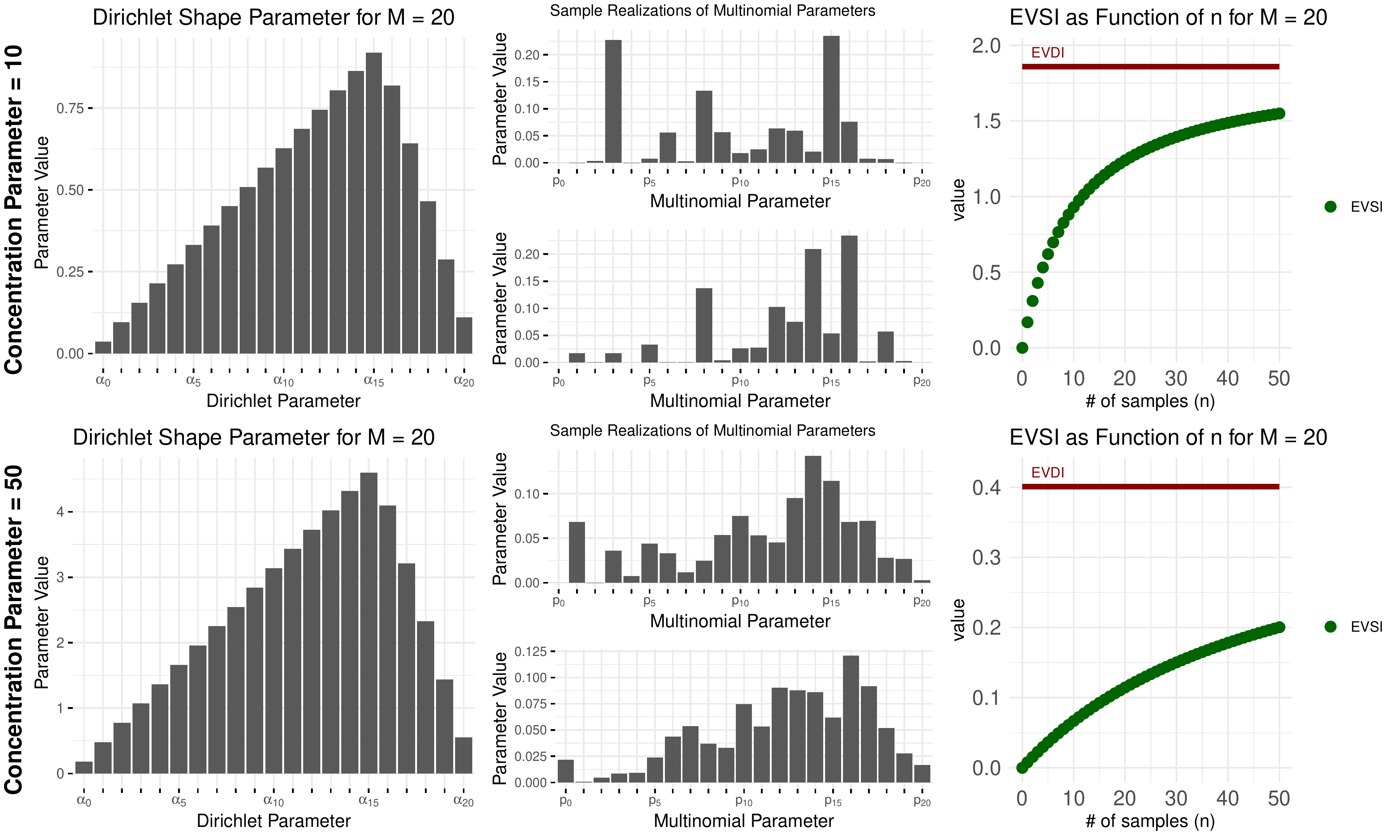}
\caption{Graphical depiction of the Dirichlet prior parameters, potential realizations for that prior (i.e. the multinomial parameters), and the EVSI/EVDI calculations as a function of $n$ samples for the given prior.  Top row for concentration parameter $\alpha = 10$ and bottom row for concentration parameter $\alpha = 50$}
\label{fig:evsiSkew}
\end{figure}

The richness of the Dirichlet prior is best seen through the lens of its intuitive reparameterization \cite{huang2005maximum}.  Let the concentration parameter $\alpha = \sum_{i=0}^M \alpha_i$ and let the vector $\mathbf{m} = \left(\frac{\alpha_0}{\alpha},  \frac{\alpha_1}{\alpha}, \ldots, \frac{\alpha_M}{\alpha} \right)$ represent the mean where the expected mean of the data observations is given as $c_{1} = \frac{1}{\alpha}\sum_{i=0}^M i \alpha_i = \sum_{i=0}^M i m_i$.  When $\alpha$ is small, say $\alpha \leq M$, the prior distribution over the simplex can differ greatly from $\mathbf{m}$ and reflect a decision maker's uncertainty around their expectation.  As $\alpha$ is made larger, the prior distribution will concentrate probability density near $\mathbf{m}$ and reflect greater confidence.  We present a graphical overview of this in Figure~\ref{fig:evsiSkew} for two different concentration parameters.  As seen, when $\alpha$ is smaller (top row of Figure~\ref{fig:evsiSkew}) the realized multinomial parameters (middle-top plot) can be further away from the mean $\mathbf{m}$ (which is proportional to the parameters in the top-left plot). As $\alpha$ increases (bottom-row) the prior distribution becomes much more informative and multinomial parameters will most likely mirror the prior Dirichlet parameters.

In terms of interpretability, Theorem \ref{th:evsi} formalizes our intuition about what drives the value of data.  Specifically, data is valuable when 1) the sample contains a lot of data (high $n$), 2) the expected variance of the data distribution is large (high $c_2 - c_1^2$), and 3) there is a lot of uncertainty regarding the true data distribution ($\alpha$ is small).  Additionally, the calculation for EVDI (eq. \ref{eqn:EVDI}) gives an interpretable upper bound on the value of data where high variance pushes to make samples more valuable and a high concentration parameter makes samples less valuable.  Lastly, the equation for efficiency (\ref{eqn:eta}) adds further insight by stating how quickly the upper bound on the value of data is approached; basically, the smaller the Dirichlet concentration parameter, the more quickly EVDI is approached with each subsequent data point.

\section{Illustrative Examples}
In this section, we demonstrate how the tractable formulation for EVSI, equation (\ref{eqn:EVSI}), can serve as a building block inside of other research initiatives.  The first example explores sample size optimization and the second example shows how a tractable EVSI calculation can lead to a tractable decision policy in a two-stage production planning problem.  In the third/last example, the EVSI formula provides a foundation from which to benchmark heuristic updating procedures that seek to estimate an underlying unknown data distribution.

\subsection{The Choice of Sample Size}
We now explore a decision maker's objective to choose the number of sample points to collect in such a way as to minimize his expected loss when assuming expected sampling cost, $C_s(n)$, is a linear function of the number of sampled points $n$: 
\begin{equation}
C_s(n) = K + s n \label{eqn:SampSize}
\end{equation}
where $s$ is the cost of one sample and $K$ represents the fixed costs of sampling.

The loss function to be minimized, $\ell_s(n)$,  combines equations (\ref{eqn:EVSI}) and (\ref{eqn:SampSize}):
\begin{equation}
\ell_s(n) = -\frac{k n (c_2-c_1^2) } {(n + \alpha)(1+\alpha)} + K + s n
\end{equation}

And assuming for practical purposes that $n$ can be treated continuously, we get the optimal sample size:
\begin{equation}
n^* = \sqrt{\frac{\alpha}{(1 + \alpha)} \frac{k}{s}(c_2 - c_1^2)} - \alpha \label{eqn:optSampSize}
\end{equation}
for cases where $n^*$ is positively valued and the fixed costs of sampling $K$ can be recovered, i.e. $V_n(\pi) > C_s(n^*$).  In all other cases, $n^* = 0$.  Equation (\ref{eqn:optSampSize}) has a nice economic interpretation where the three terms represent the strength of the prior, the ratio between the scaling of the quadratic loss costs and the unit sampling costs, and the predicted variance of the data distribution.

\subsection{Two-Stage Production Planning}
The example shown here is a simple two-stage production planning problem (see, e.g., \cite{fisher2001}) where the decision maker seeks to optimally schedule the $2^{nd}$ production run. 

Assume $J$ periods make up a selling season.  Each period, $j \in J$ faces independent and identical categorical demand with Dirichlet prior and quadratic loss (i.e. a repeated newsvendor setting with quadratic loss) with identical shipments scheduled for each period.  A decision maker can choose either 1) to schedule the delivery quantity for each period in the entire selling season or, 2) at cost $K$ can specify a period $j^*$ after which the scheduled delivery quantity can be changed.  Assuming this change date will be contractually set in advance of the selling season, find $j^*$ to minimize expected net costs over the entire season $J$.

The net cost function for this problem is:
\begin{equation}
  C(j)= \begin{cases}
         0, & \text{if $j=0$},\\
         K - \left(J-j\right)\dfrac{kj(c_2 - c_1^2)}{(j + \alpha)(1+\alpha)}, & \text{if $j\in (0,J]$}
        \end{cases}
\end{equation}

When $j\in (0,J]$, the net cost function $C(\cdot)$ is strictly convex and has a unique global minimum value. The optimal period $j^*$ is
\begin{equation*}
  j^* = \underset{j \in {\{0,1,\ldots, J\}}}{\arg \min} ~C(j)
\end{equation*}


When $\min C(j) = 0$ for $ 0<j\leq J$, we choose $j^* = 0$.

For the case when $\min C(j)<0$, we have
\begin{equation*}
    j^* = \sqrt{\alpha(J+\alpha)}-\alpha
\end{equation*}

Considering that $j^*$ must be a non-negative integer, summarizing different cases we have the optimal $j^*$ as
\begin{equation}
   j^* = \begin{cases}
         0, & \text{if $\underset{j \in [0, J]} {\min} C(j)=0$},\\
         \underset{j \in {\{\lfloor{j_0}\rfloor, \lceil{j_0}\rceil\}}}{\arg \min} ~C(j), & \text{if $\underset{j \in [0, J]} {\min} C(j)<0$}.
         \end{cases}
\end{equation}
where $j_0 = \sqrt{\alpha(J+\alpha)}-\alpha$.

\subsection{Benchmarking Data-Driven Algorithms}
An active area of research is to propose algorithms for decisions in repeated settings where minimal assumptions about the underlying data distribution are known.  These approaches include Sample Average Approximation(SAA) \cite{Retsef2007,levi2015data}, concave adaptive value estimation (CAVE) \cite{Godfrey2001}, and Second Order Belief Maximum Entropy (SOBME) \cite{saghafian2016newsvendor}.  When benchmarking these algorithms, it is customary to pick a handful of ``true'' distributions where the algorithm competes against a known optimal solution.

With the introduction of a closed-form EVSI calculation in the context of a Dirichlet prior, a more robust benchmarking scenario can be achieved.  Instead of picking a ``true'' data distribution, we pick a ``true prior'' from the Dirichlet family with support matching the problem of interest.  This prior can be used to then simulate ``true'' data distributions (as many as we want) by which we can estimate the reduction in squared loss as a function of $n$, the number of data samples.  Given this setup, a comparison of a proposed algorithm can be made against a known optimal \textit{updating procedure}.  After all, it is the updating procedure that we are seeking to validate, and the optimal updating procedure to benchmark new algorithms against is, therefore, the Bayesian one detailed in the proof of Theorem \ref{th:evsi} (see appendix).

\begin{figure}
\centering
\includegraphics[width=0.6\textwidth]{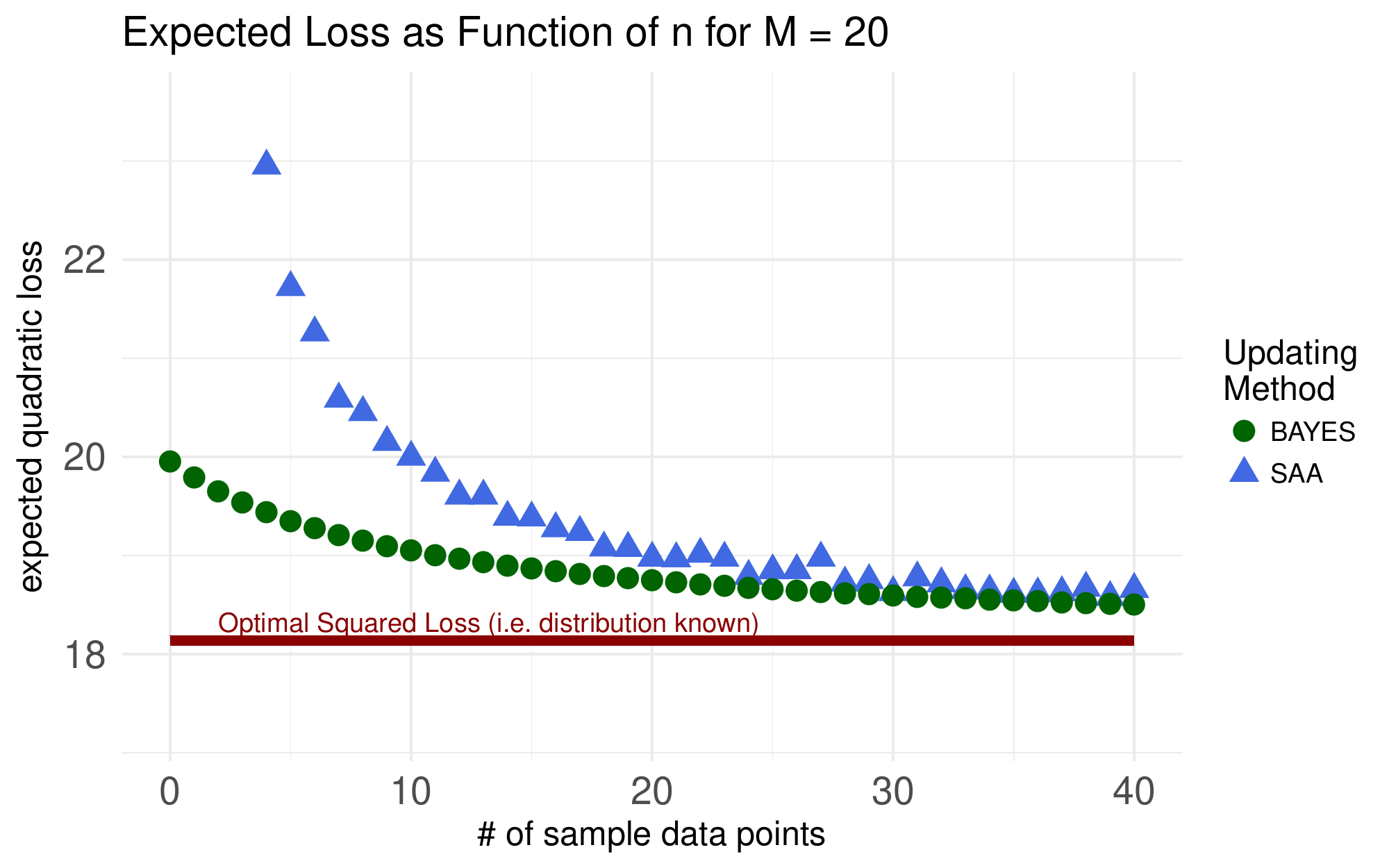}
\caption{Comparing the sample average approximation(SAA) updating procedure to the known Bayesian (BAYES) optimal updating procedure.}
\label{fig:benchmark}
\end{figure}

As a proof of concept, Figure~\ref{fig:benchmark} is an example benchmarking the well-known sample average approximation (SAA) (see \cite{Retsef2007}) against the known optimal Bayesian updating procedure (BAYES) using a $\textrm{Dirichlet}(\alpha_0, \alpha_1, \ldots, \alpha_M)$ prior with $M=20$, $\alpha = 10$, and $\mathbf{m} \propto  \{1,2,3,4,5,6,7,8,9,10,11,12,13,14,13,11,9,7,5,3,1\}$ (chosen to be slightly skewed). In this scenario, we see the value of prior information in small data settings as BAYES outperforms SAA.  It also shows how as the amount of data increases, the non-parametric SAA algorithm's performance improves and closely mimics that of the optimal Bayesian updating procedure.

\section{Conclusion}
The use of preposterior analysis in this paper provides a formal method for valuing data prior to its collection and as such, should serve as a building block in many systems and models going forward.  By expanding the support of the underlying data-generating process from $[M] = \{0,1\}$ to $[M] = \{0,1,\ldots,M\}$, the beta-binomial EVSI calculations are successfully generalized to a Dirichlet-multinomial setting.  Using this new EVSI computation, three illustrative examples valuing data prior to its collection are shown, there are potentially many other contexts where this tractable formulation might also prove useful. Researchers in two particular areas, medical decision making and active (machine) learning are known to be interested in EVSI types of calculations (see, e.g., \cite{ades2004expected,haertel2008return,jackson2019value,muesing2021fully}). And we look forward to hearing of other useful deployments for this method of valuing data prior to its collection.

%
\appendix

\section{Proof of Proposition \ref{th:nonnegativity} and Theorem \ref{th:evsi}}
\label{appendix}

\subsection{Proof of Proposition \ref{th:nonnegativity}}

The expected value of sample information is 
\begin{equation}
\label{proof: Vbar}
  \begin{split}
  V_n\left(\pi\right)
  &= \mathbb{E}_{T} \left[R(T,a^*(\pi))\right] - \mathbb{E}_T \mathbb{E}_{X|T} \left[R(T,a^*(\pi_X)) \right].
  \end{split}
\end{equation}

For the first term in eq. (\ref{proof: Vbar}), we have
  \begin{equation}
  \label{eqn:avgBayesRiskPi}
    \begin{split}
      \mathbb{E}_{T} \left[R(T,a^*(\pi))\right]
      &= k \mathbb{E}_T\left[ \mathbb{E}_{D|T} \left[ \left(D-a^*\left( \pi \right) \right)^2 \right] \right] \\
      &= k \mathbb{E}_T\left[ \mathbb{E}_{D|T} \left[ \left(D-\mathbb{E}\left[D\right] \right)^2 \right] \right]\\
      &= k\mathbb{E}_D\left[ \left( D - \mathbb{E}\left[D\right] \right)^2\right]\\
      &= k\textrm{Var}\left[D\right].
    \end{split}
  \end{equation}
The second line is due to the optimal action under squared loss being the mean (see eq. (\ref{eqn:expD})).  The third line of equation (\ref{eqn:avgBayesRiskPi}) follows from the \emph{law of total expectation}. Thus, the optimal Bayes risk without sample information under quadratic loss (\ref{eq:quadLoss}) is the marginal variance of $D$ scaled by a factor $k$.

Similarly, for the second term in eq. (\ref{proof: Vbar}) we find
\begin{equation}
\label{eqn:avgBayesRiskPi_x}
  \begin{split}
     \mathbb{E}_T \left\{ \mathbb{E}_{X|T} \left[ R\left( T, a^*\left(\pi_X\right)\right) \right]\right\}
     &= k\mathbb{E}_T \left\{ \mathbb{E}_{X|T} \left[ \mathbb{E}_{D|T} \left[ \left(D-a^*\left( \pi_X \right) \right)^2 \right]\right]\right\} \\
     &= k\mathbb{E}_T \left\{ \mathbb{E}_{X|T} \left[ \mathbb{E}_{D|T} \left[ \left(D - \mathbb{E}_{D|X}\left[ D \right] \right)^2 \right] \right]\right\} \\
     &= k \mathbb{E}_X \left\{ \mathbb{E}_{D|X} \left[ \left( D - \mathbb{E}_{D|X} \left[ D \right] \right)^2 \right] \right\}\\
     &= k \mathbb{E}_X \left[ \textrm{Var}_{D|X}\left[D\right] \right].
  \end{split}
\end{equation}
The optimal Bayes risk under quadratic loss (\ref{eq:quadLoss}) if a sample of size $n$ is to be collected is the expected variance of the predictive posterior distribution of $D$ scaled by a factor $k$.

Combining (\ref{proof: Vbar}),(\ref{eqn:avgBayesRiskPi}), and (\ref{eqn:avgBayesRiskPi_x}), we complete the proof:
\begin{equation}
\label{proof: Vbar2}
  \begin{split}
  V_n\left(\pi\right)
  &= \mathbb{E}_{T} \left[R(T,a^*(\pi))\right] - \mathbb{E}_T \left\{ \mathbb{E}_{X|T} \left[R\left( T, a^*\left(\pi_X\right)\right) \right]\right\} \\
  &= k \textrm{Var}\left[D\right] - k\mathbb{E}_X \left[ {\textrm{Var}_{D|X} \left[D \right]} \right] \\
  &= k \left\{ {\textrm{Var}\left[D\right]} - {\mathbb{E}_X \left[ {\textrm{Var}_{D|X} \left[D \right]} \right]} \right\} \\
  &= k \textrm{Var}_X\left[ {\mathbb{E}_{D|X}\left[ D \right]} \right] \\
  &\geq 0.
  \end{split}
\end{equation}
The last equal sign in equation (\ref{proof: Vbar2}) follows from the \emph{law of total variance}.
Since $k>0$ and $\textrm{Var}_X\left[ {\mathbb{E}_{D|X}\left[ D \right]} \right] \geq 0$ for any $X$, we have $V_n\left(\pi\right)\geq 0$ for any sample size $n$.
\hfill$\Box$

\subsection{Proof of Theorem \ref{th:evsi}}

Consider the prior distribution for the data-generating process
\begin{equation*}
\pi = \textrm{Dirichlet}(\alpha_0,\alpha_1,\ldots,\alpha_M).
\end{equation*}
Suppose our information consists of $n$ samples of the data distribution.
Let $n_j$, $j \in [M]$ be the frequency of the data
being $j$ so that $n_j$ are integers such that $\sum_{j=0}^M n_j = n$.
Then, because the multinomial and Dirichlet distributions are conjugate,
\begin{eqnarray*}
\pi_X &=& \textrm{Dirichlet}(\alpha_0+n_0,\alpha_1+n_1,\ldots,\alpha_M+n_M).
\end{eqnarray*}
Because $\pi$ and $\pi_X$ both belong to the same class of distributions, we derive closed-form
valuations for the information $X$. The corresponding marginal likelihoods for $\pi$ and $\pi_X$ are
\begin{eqnarray*}
q_{\pi}(d) &=& \frac{\alpha_d}{\alpha}, \\
q_{\pi_X}(d) &=& \frac{\alpha_d+n_d}{\alpha +n},
\end{eqnarray*}
where $\alpha = \sum_{i=0}^M \alpha_i$.
If the information happens to occur in such a way that $n_j \propto \alpha_j$ for each $j$, then
the updated marginal likelihood is unchanged: $q_d(\pi) = q_d(\pi_X)$, $d \in [M]$.

For convenience, define the quantities
\begin{eqnarray*}
Z &=& \frac{1}{n} \sum_{d=0}^M d n_d,\\
c_{1} &=& \frac{1}{\alpha}\sum_{d=0}^M d \alpha_d, \\
c_{2} &=& \frac{1}{\alpha}\sum_{d=0}^M d^2 \alpha_d,
\end{eqnarray*}
where $Z$ represents the average frequency of the sample, $c_1$ the prior expectation for a sample value, and $c_2$ the prior second moment for the sample value. 

Given the loss function in (\ref{eq:quadLoss}), the Bayes risk and action without sample information can be explicitly calculated
\begin{eqnarray}
\bar{R}(\pi,a) &=& \mathbb{E}_{T\sim\pi}[R(T,a)], \label{eqn:beg} \\
&=& \mathbb{E}_{T\sim \pi}\left[ \sum_{d=0}^M p_T(d) \ell(d,a) \right], \\
&=& \sum_{d=0}^M \ell(d,a) \mathbb{E}_{T\sim \pi}[p_T(d)], \\
&=& \sum_{d=0}^M  \ell(d,a) q_{\pi}(d), \label{eqn:qd}
\end{eqnarray}
where $\{q_{\pi}(0),q_{\pi}(1),\ldots,q_{\pi}(M)\}$ is the \emph{marginal likelihood}.
The Bayes action minimizes eq. (\ref{eqn:qd}):
\begin{eqnarray}
\nonumber \pt{\bar{R}(\pi,a)}{a} &=& -2k\sum_{d=0}^M (d-a) q_{\pi}(d) = -2k \left(\sum_{d=0}^M d q_{\pi}(d) - a \sum_{d=0}^M q_{\pi}(d)\right) = 0,\\
\nonumber \Rightarrow a^*(\pi) &=& \sum_{d=0}^M d q_{\pi}(d),\\
&=& \mathbb{E}_{q_{\pi}}[D], \label{eqn:expD}\\
&=& c_1.  \label{eqn:end}
\end{eqnarray}
the mean data outcome under the prior marginal likelihood. The corresponding Bayes Risk is
\begin{eqnarray*}
\bar{R}(\pi,a^*(\pi)) &=& k\sum_{d=0}^M(d-a^*(\pi))^2 q_{\pi}(d),\\
&=& k\textrm{Var}_{q_{\pi}}[D], \\
&=& k(c_2 - c_1^2).
\end{eqnarray*}
Similarly, with sample information we have
\begin{eqnarray}
\nonumber \pt{\bar{R}(\pi_X,a)}{a} &=& -2k\sum_{d=0}^M (d-a) q_{\pi_X}(d),\\
\nonumber &=& -2 k\left(\sum_{d=0}^M d q_{\pi_X}(d) - a \sum_{d=0}^M q_{\pi_X}(d)\right) = 0,\\
\nonumber \Rightarrow a^*(\pi_X) &=& \sum_{d=0}^M d q_{\pi_X}(d),\\
\nonumber &=& \mathbb{E}_{q_{\pi_X}}[D], \\
&=& \frac{\alpha c_1 + nZ}{\alpha+n}, \label{eqn:astar}
\end{eqnarray}
which is the mean data outcome under the posterior marginal likelihood. Now expressing EVSI as
\begin{eqnarray}
V_n(\pi) &=& \bar{R}(\pi,a^*(\pi)) - \mathbb{E}_{T} \mathbb{E}_{X|T} R(T,a^*(\pi_X)), \label{eqn:VV}
\end{eqnarray}
note the inner expectation is taken over the data frequency, which follows
a multinomial distribution: $(n_0,\ldots,n_M) \sim \textrm{Multinomial}(p_t(0),\ldots,p_t(M))$),
and the outer expectation is taken over all possible distributions $p_{t^*} \sim \textrm{Dir}(\alpha_0,\ldots,\alpha_M)$.

The first term in (\ref{eqn:VV}) has already been evaluated as $k(c_2 - c_1^2)$.
We now calculate the second term.
\begin{eqnarray}
\nonumber R(t,a^*(\pi_X)) &=& k\sum_{d=0}^M p_t(d) (d-a^*(\pi_X))^2 \\
\nonumber &=& k\sum_{d=0}^M p_t(d) \left[d-\frac{\alpha c_1 + nZ}{\alpha+n}\right]^2 \\
\nonumber \Rightarrow \mathbb{E}_{X|T=t}\left[ R(t,a^*(\pi_X)) \right]
&=& k\sum_{d=0}^M p_t(d) \left[ d^2 - \left( \frac{2nd}{\alpha+n} - \frac{2n\alpha c_1}{(\alpha+n)^2}\right) \right.
\mathbb{E}_X[Z] \notag \\
&& \left. - \frac{2d\alpha c_1}{\alpha+n} + \frac{\alpha^2 c_1^2}{(\alpha+n)^2}
+ \frac{n^2}{(\alpha+n)^2}\mathbb{E}_X[Z^2]\right]. \label{eqn:1a}
\end{eqnarray}
Since $Z(n_0,\ldots,n_M) = \frac{1}{n}\sum_{d=0}^M d n_d$,
\begin{eqnarray*}
\mathbb{E}_{X|T=t}[Z] &=& \sum_{d=0}^M d p_t(d), \\ \label{EZ2}
\mathbb{E}_{X|T=t}[Z^2] &=& \textrm{Var}_{X|T=t}[Z] + \left(\mathbb{E}_{X|T=t}[Z]\right)^2 \\
              &=& \frac{1}{n}\sum_{d=0}^M d^2 p_t(d)  + \frac{(n-1)}{n}\left( \sum_{d=0}^M d p_t(d) \right)^2,
\end{eqnarray*}
where the last line follows from the fact
\begin{align}
	\textrm{Var}_{X|T=t}[Z] 
	&= \textrm{Var}_{X|T=t}\left[\frac{1}{n}\sum_{d=0}^M d n_d\right] \nonumber \\
	&= \frac{1}{n^2}\textrm{Var}_{X|T=t}\left[\sum_{d=0}^M d n_d\right] \nonumber \\
	&= \frac{1}{n^2}\left\{\sum_{d=0}^M d^2 \textrm{Var}_{X|T=t}\left[n_d\right] + 
	    2 \sum_{0=i<j}^M ij \textrm{Cov}_{X|T=t}\left(n_i, n_j\right) \right\} \nonumber \\
	&= \frac{1}{n^2}\left\{\sum_{d=0}^M d^2 n p_t(d)\left(1 - p_t(d)\right) -
	    2\sum_{0=i<j}^M ij np_t(i)p_t(j) \right\} \nonumber \\
	&= \frac{1}{n}\sum_{d=0}^M d^2p_t(d) - 
	    \frac{1}{n} \left\{ \sum_{d=0}^M d^2 p_t^2(d) + 2 \sum_{0=i<j}^M ijp_t(i)p_t(j)\right\} \nonumber \\
	&= \frac{1}{n}\sum_{d=0}^M d^2p_t(d) - \frac{1}{n}\left(\sum_{d=0}^M d p_t(d) \right)^2.
\end{align}
Eq. (\ref{eqn:1a}) becomes
\begin{eqnarray*}
\mathbb{E}_{X|T=t}[R(t,a^*(\pi_X))] &=&
k \left\{ \left(1 + \frac{n}{(\alpha+n)^2}\right) \sum_{d=0}^M d^2 p_t(d) +
\left( \frac{2\alpha n c_1}{(\alpha+n)^2} - \frac{2\alpha c_1}{\alpha+n} \right) \sum_{d=0}^M d p_t(d) \right.\\
&& \left.+ \left[ \frac{n(n-1)}{(\alpha+n)^2} - \frac{2n}{\alpha+n} \right]\left[ \sum_{d=0}^M d p_t(d) \right]^2 + \frac{\alpha^2 c_1^2}{(\alpha+n)^2}\right\}.
\end{eqnarray*}
The final step is to take the expectation over all possible beliefs $p_t \sim
\textrm{Dirichlet}(\alpha_0,\ldots,\alpha_M)$. Using the fact that 
\begin{eqnarray*}
\mathbb{E}_{T\sim \pi}[p_T(i)] &=& \frac{\alpha_i}{\alpha}, \\
\mathbb{E}_{T\sim \pi}[p_T(i)^2] &=& \textrm{Var}[p_T(i)] + \mathbb{E}_T[p_T(i)]^2 \\
&=& \frac{\alpha_i(\alpha-\alpha_i)}{\alpha^2(\alpha+1)} + \frac{\alpha_i^2}{\alpha^2} \\
&=& \frac{\alpha_i(\alpha_i+1)}{\alpha(\alpha+1)}, \\
\mathbb{E}_{T\sim \pi}[p_T(i) p_T(j)]  &=& \textrm{Cov}[p_T(i),p_T(j)] + \mathbb{E}_T[p_T(i)] \mathbb{E}_T[p_T(j)], \qquad i \neq j, \\
&=& -\frac{\alpha_i \alpha_j}{\alpha^2(\alpha+1)} + \frac{\alpha_i \alpha_j}{\alpha^2} \\
&=& \frac{\alpha_i \alpha_j}{\alpha(\alpha+1)},
\end{eqnarray*}
and
\begin{align}
	\mathbb{E}_{T\sim \pi}\left[\left( \sum_{d=0}^M d p_T(d) \right)^2\right]
	&= \mathbb{E}_{T\sim \pi}\left[\sum_{d=0}^M d^2 p_T^2(d) + 2\sum_{0=i<j}^M i j p_T(i) p_T(j)\right] \nonumber \\
	&= \sum_{d=0}^M d^2 \mathbb{E}_{T\sim \pi}\left[ p_T^2(d) \right] + 2\sum_{0=i<j}^M i j \mathbb{E}_{T\sim \pi}\left[ p_T(i) p_T(j) \right] \nonumber \\
	&= \sum_{d=0}^M d^2 \frac{\alpha_d \left(\alpha_d + 1\right)}{\alpha \left(\alpha + 1 \right)} +
	    2\sum_{0=i<j}^M i j \frac{\alpha_i \alpha_j}{\alpha \left(\alpha + 1\right)} \nonumber \\
	&= \frac{1}{\alpha + 1}\sum_{d=0}^M \frac{d^2 \alpha_d}{\alpha } + 
	    \frac{1}{\alpha \left(\alpha + 1\right)} \left(\sum_{d=0}^M d^2 \alpha_d^2 + 2\sum_{0=i<j}^M i j  \alpha_i \alpha_j  \right) \nonumber \\
	&= \frac{c_2}{\alpha + 1} + \frac{1}{\alpha \left(\alpha + 1\right)} \left(\sum_{d = 0}^Md \alpha_d \right)^2 \nonumber \\
	&= \frac{c_2}{\alpha + 1} + \frac{\alpha c_1^2}{\alpha + 1},
\end{align}
we obtain
\begin{eqnarray*}
\mathbb{E}_T \mathbb{E}_{X|T}[R(T,a^*(\pi_X))]
&=& k \left\{ \left(1 + \frac{n}{(\alpha+n)^2}\right) \sum_{d=0}^M  \frac{d^2 \alpha_d}{\alpha} +
\left( \frac{2\alpha n c_1}{(\alpha+n)^2} - \frac{2\alpha c_1}{\alpha+n} \right) \sum_{d=0}^M  \frac{d\alpha_d}{\alpha} \right. \notag \\
&& \left.+ \left[ \frac{n(n-1)}{(\alpha+n)^2} - \frac{2n}{\alpha+n} \right]\left[
\frac{c_2}{\alpha+1} + \frac{\alpha c_1^2}{\alpha+1}
\right] + \frac{\alpha^2 c_1^2}{(\alpha+n)^2} \right\}\\
&=& k(c_2 - c_1^2) \frac{\alpha(1+\alpha+n)}{(1+\alpha)(n+\alpha)}.
\end{eqnarray*}
The value of $n$ samples from the data distribution is therefore
\begin{eqnarray}
V_n(\pi) &=& k(c_2 - c_1^2) - k(c_2 - c_1^2) \frac{\alpha(1+\alpha+n)}{(1+\alpha)(n+\alpha)} \nonumber \\
&=& \frac{k n (c_2-c_1^2) }{(n + \alpha)(1+\alpha)}.
\label{eqn:2}
\end{eqnarray}
\hfill$\Box$

%



\end{document}